\newtheorem{theorem}{Theorem}
\newtheorem{definition}{Definition}
\DeclarePairedDelimiter{\norm}{\lVert}{\rVert}
\lstdefinestyle{r-output}{
style = r-style,
style = r-output-user,
}
\DeclareMathOperator\etr{etr}
\title{ A novel two-sample test within the space of symmetric positive definite matrix distributions and its application in finance

%A novel tests for equality of symmetric positive definite matrix distributions
%%%% Cite as
%%%% Update your official citation here when published
%motivational Akutagawa quote (no pun intended - Mishima next time):
% Let your dreams outgrow the shoes of your expectations.
}
\author{ \href{https://orcid.org/0000-0002-1964-7539}{\includegraphics[scale=0.06]{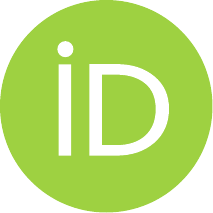}\hspace{1mm} Žikica Lukić} \\
	PhD student at the Faculty of Mathematics\\
	University of Belgrade\\
	Belgrade, 11000, Serbia \\
	\texttt{zikicamaster@gmail.com} \\
	%% examples of more authors
	\And
	\href{https://orcid.org/0000-0001-8243-9794}{\includegraphics[scale=0.06]{orcid.pdf}\hspace{1mm}Bojana Milošević} \\
    Faculty of Mathematics\\
	University of Belgrade\\
	Belgrade, 11000, Serbia \\
	\texttt{bojana.milosevic@matf.bg.ac.rs} \\
}
\begin{document}

\maketitle

\begin{abstract}
This paper introduces a novel two-sample test for a broad class of orthogonally equivalent positive definite symmetric matrix distributions. Our test is the first of its kind and we derive its asymptotic distribution. To estimate the test power, we use a warp-speed bootstrap method and consider the most common matrix distributions. We provide several real data examples, including the data for main cryptocurrencies and stock data of major US companies. The real data examples demonstrate the applicability of our test in the context closely related to algorithmic trading. The popularity of matrix distributions in many applications and the need for such a test in the literature are reconciled by our findings.
\end{abstract}
\keywords{Hankel transform \and Wishart distribution \and inverse Wishart distribution \and stability of cryptomarkets}
\textbf{MSC 2020: }{Primary: 62H15; Secondary: 62P05, 62E20}
\section{Introduction}\label{sec::headings}

Modern computational methods have given rise to the popularity of matrix distributions in contemporary statistics and statistical learning. The distributions and their properties have recently been studied in the context of cluster analysis \cite{Gallaugher2018Finite, Tomarchio2020Two}, classification \cite{Thompson2020Classification}, and regression \cite{Ding2018Matrix}.

Positive semi-definite matrices have many applications, including medical imaging \cite{moakher2006symmetric} and finance \cite{dellaportas2012cholesky}. Positive definite matrix variate distributions have recently been studied in \cite{ouimet2022symmetric, uhler2018exact}, but the works regarding goodness-of-fit testing in this context are sparse. The work of Hadjicosta and Richards \cite{hadjicosta2020integral} is the only known goodness-of-fit (GOF) test utilising integral transform methods in this context to date. { The work of Alfelt, et al. \cite{alfelt2020goodness} considers the Bartlett decomposition to construct a goodness-of-fit test for the centralized Wishart process.}  {In addition, there has been recent interest regarding covariance matrix testing in high dimensions \cite{dornemann2023likelihood, guo2020testing, jiang2012likelihood}}.

This is in stark contrast to the current developments in the univariate case. Hankel transform GOF tests have been developed for the exponential distribution \cite{baringhaus2010empirical, baringhaus2013ks} and gamma distribution \cite{hadjicosta2020Gamma}, while closely related Laplace transforms have been used for the development of GOF tests for exponential \cite{cuparic2022new, henze2002tests, milovsevic2016new}, gamma \cite{henze2012goodness}, inverse Gaussian \cite{henze2002IG} and Rayleigh distribution \cite{meintanis2003tests}. { Baringhaus and Kolbe developed in the univariate case a two-sample goodness-of-fit test using empirical Hankel transforms \cite{baringhaus2015two}.}

We now turn our attention to the fundamental concepts that will be used in the subsequent section of this paper. Before we define the Hankel transform of a matrix random variable we introduce the following notation. We denote the matrix transpose of $X$ as $X'$, and $X>0$ indicates that $X$ is a positive definite matrix. In addition, we denote with $\etr{(X)}$ the expression $\exp(tr(X))$. We would, without explicit mention, assume that the arguments of Bessel functions are symmetric matrices, i.e. such that $X'=X$. $\mathcal{P}_+^{m\times m}$ denotes the space of positive definite $m\times m$ matrices.  $\Re$ denotes the real part of the complex number.
\begin{definition}\cite{hadjicosta2020integral}%[p. 1329] %taking verbatim, have to cite
Let $X>0$ be a symmetric random matrix with probability density function $f(X)$. For $\Re(\nu)>\frac{1}{2}(m-2)$ the Hankel transform of order $\nu$ is defined as the function
\begin{align*}
    \mathcal{H}_{X, \nu}(T)=E_X\big(\Gamma_m\big(\nu+\frac{1}{2}(m+1)\big)A_\nu(TX)\big),
\end{align*}
where $T>0$ is a symmetric matrix, $\Gamma_m$ denotes the multivariate Gamma function and $A_\nu(T)$ denotes the Bessel function of the first kind of order $\nu$.
\end{definition}
For more information regarding Bessel functions of matrix argument and closely related concepts of zonal and Laguerre polynomials, we point out to \cite{hadjicosta2019thesis, herz1955bessel, jiu2020calculation, muirhead2009aspects}.

The Hankel transform has many attractive properties. It is a continuous function of $T$, its norm is bounded from above by 1, there exists an inversion formula \cite{hadjicosta2020Gamma}, it is continuous with respect to $X$ and uniquely determines the distribution of the random variable $X$, i.e. the following theorem proved in \cite{hadjicosta2020integral} holds. % which is of special importance to us. We are outlining the proof given in \cite{hadjicosta2020integral}[p. 1330-1331]. 
\begin{theorem}\cite{hadjicosta2020integral}\label{teoremaHankel1}% [p. 1330]
Let $X$ and $Y$ be $m\times m$ positive definite random matrices with Hankel transforms $\mathcal{H}_{X, \nu}$ and $\mathcal{H}_{Y, \nu}$ respectively. If $\mathcal{H}_{X, \nu}=\mathcal{H}_{Y, \nu}$, then $X\overset{D}{=}Y$. 
\end{theorem}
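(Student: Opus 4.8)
The plan is to reduce the assertion to the classical uniqueness theorem for Laplace transforms of probability measures on the open cone $\mathcal{P}_+^{m\times m}$. The link between the two transforms is a Herz-type Laplace integral for hypergeometric functions of matrix argument: since $A_\nu$ equals, up to the factor $\Gamma_m(\nu+\tfrac12(m+1))$, the series ${}_0F_1(\nu+\tfrac12(m+1);\,\cdot\,)$, one has for every $R\in\mathcal{P}_+^{m\times m}$
\begin{align*}
\int_{\mathcal{P}_+^{m\times m}}\etr(-RS)\,(\det S)^{\nu}\,A_\nu(SX)\,dS=(\det R)^{-\nu-\frac12(m+1)}\,\etr(-R^{-1}X),
\end{align*}
the integral converging absolutely for $\Re(\nu)>\tfrac12(m-2)$: the factor $(\det S)^{\nu}$ controls the integrand near $S=0$, while $\etr(-RS)$ forces exponential decay at infinity, $A_\nu$ being bounded. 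I would first record this identity, taking care of the admissible range of $\nu$ and of the normalising constants.

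Multiplying by $\etr(-RS)(\det S)^{\nu}$ and integrating $\mathcal{H}_{X,\nu}$ over the cone, I would interchange the integral with the expectation $E_X$; this is licit by Fubini's theorem, since $|\Gamma_m(\nu+\tfrac12(m+1))\,A_\nu(SX)|\le 1$ for all $S,X>0$ (apply the definition of the Hankel transform to a point mass, using that its norm is at most $1$) and $\int_{\mathcal{P}_+^{m\times m}}\etr(-RS)(\det S)^{\nu}\,dS<\infty$. The displayed identity then gives
\begin{align*}
\int_{\mathcal{P}_+^{m\times m}}\etr(-RS)\,(\det S)^{\nu}\,\mathcal{H}_{X,\nu}(S)\,dS=\Gamma_m\!\big(\nu+\tfrac12(m+1)\big)\,(\det R)^{-\nu-\frac12(m+1)}\,E_X\!\big(\etr(-R^{-1}X)\big).
\end{align*}
The left-hand side depends on $X$ only through $\mathcal{H}_{X,\nu}$ and the scalar prefactor on the right is finite and nonzero, so the hypothesis $\mathcal{H}_{X,\nu}=\mathcal{H}_{Y,\nu}$ forces $E_X(\etr(-R^{-1}X))=E_Y(\etr(-R^{-1}Y))$ for every $R\in\mathcal{P}_+^{m\times m}$. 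Since $R\mapsto R^{-1}$ is a bijection of $\mathcal{P}_+^{m\times m}$ onto itself, the Laplace transforms $\Theta\mapsto E(\etr(-\Theta X))$ and $\Theta\mapsto E(\etr(-\Theta Y))$ coincide on the whole cone (where they are finite, being at most $1$), and the uniqueness theorem for the Laplace transform on the symmetric cone $\mathcal{P}_+^{m\times m}$ yields $X\overset{D}{=}Y$.

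I expect the only real obstacle to be the first step: establishing the matrix Bessel--Laplace identity with its exact domain of validity, i.e.\ verifying absolute convergence of the integral over the cone and tracking the constants. After that, the appeal to Fubini is routine thanks to the uniform bound on $A_\nu$, and the passage to Laplace-transform uniqueness is standard. (If one grants that $X$ and $Y$ have densities, as the Definition presupposes, a shorter alternative is to invoke the inversion formula for the Hankel transform quoted above, which directly recovers $f_X=f_Y$; the route sketched here, however, requires no such assumption.)
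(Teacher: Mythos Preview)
The paper does not supply its own proof of this theorem; it is quoted from Hadjicosta and Richards \cite{hadjicosta2020integral} and invoked only as an auxiliary tool in the proof of Theorem~2, so there is nothing in the present paper to compare your argument against. For what it is worth, your sketch is correct and is the natural route: the Herz--Laplace identity you display follows from the Laplace transform of ${}_0F_1$ combined with ${}_1F_1(a;a;Z)=\etr(Z)$, the Fubini interchange is licensed exactly by the uniform bound $|J_\nu|\le 1$ that the paper itself records, and the passage to Laplace-transform uniqueness on $\mathcal{P}_+^{m\times m}$ closes the argument.
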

However, the Hankel transform is not orthogonally invariant, thus being dependent on the choice of basis. An orthogonally invariant Hankel transform is  defined as follows
\begin{definition}\cite{hadjicosta2020integral}%[p. 1329] %taking verbatim, have to cite
Let $X>0$ be a random matrix with probability density function $f(X)$. For $\Re(\nu)>\frac{1}{2}(m-2)$ we define the orthogonally invariant Hankel transform of order $\nu$ as the function
\begin{align*}
    \Tilde{\mathcal{H}}_{X, \nu}(T)=E_X\big(\Gamma_m\big(\nu+\frac{1}{2}(m+1)\big)A_\nu(T, X)\big),
\end{align*}
where $T>0$, $\Gamma_m$ denotes the multivariate Gamma function and $A_\nu(T, X)$ denotes the Bessel function of the first kind of order $\nu$ with two matrix arguments. 
\end{definition}
{ We use the following notation throughout the paper:
\begin{equation*}
    J_\nu(T) = \Gamma\big(\nu + \tfrac12 (m+1)\big) A_\nu(T)
\end{equation*}
and
\begin{equation*}
J_\nu(T,X) = \Gamma\big(\nu + \tfrac12 (m+1)\big) A_\nu(T,X).    
\end{equation*}
}
We call two matrix distributions X and Y \emph{orthogonally equivalent in distribution} if there exists an orthogonal matrix $P$ such
that $X \overset{d}{=} P'Y P$. 

The following theorem provides the uniqueness of the orthogonal Hankel transform in the class of random matrices which are orthogonally equivalent in distribution.
\begin{theorem}
Let $X$ and $Y$ be $m\times m$ random symmetric positive-definite matrices and  $\Tilde{\mathcal{H}}_{X, \nu}$ and $\Tilde{\mathcal{H}}_{Y, \nu}$ their orthogonally invariant Hankel transforms respectively. Then $\Tilde{\mathcal{H}}_{X, \nu}=\Tilde{\mathcal{H}}_{Y, \nu}$ if and only if $X$ and $Y$ are orthogonally equivalent in distribution.    
\end{theorem}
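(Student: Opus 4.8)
The plan is to collapse the orthogonally invariant Hankel transform onto the ordinary one, and then quote Theorem~\ref{teoremaHankel1}. The starting point is Herz's integral representation of the Bessel function with two matrix arguments \cite{herz1955bessel, hadjicosta2020integral},
\[
J_\nu(T,X)\;=\;\int_{O(m)}J_\nu\big(T\,HXH'\big)\,dH ,
\]
where $dH$ is the normalized Haar measure on the orthogonal group $O(m)$. Let $H$ be Haar-distributed on $O(m)$, independent of $X$, and set $\overline{X}:=HXH'$. Since the normalized Bessel function is bounded on $\mathcal{P}_+^{m\times m}$ for $\Re(\nu)>\tfrac12(m-2)$ (this is the bound behind $|\Tilde{\mathcal{H}}_{X,\nu}|\le1$), Fubini's theorem applies and
\[
\Tilde{\mathcal{H}}_{X,\nu}(T)\;=\;E_X\!\int_{O(m)}J_\nu\big(T\,HXH'\big)\,dH\;=\;E\big[J_\nu(T\,\overline{X})\big]\;=\;\mathcal{H}_{\overline{X},\nu}(T);
\]
that is, the orthogonally invariant Hankel transform of $X$ is exactly the ordinary Hankel transform of the orthogonally randomized matrix $\overline{X}$.

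Applying this identity to both $X$ and $Y$ and invoking Theorem~\ref{teoremaHankel1} gives immediately
\[
\Tilde{\mathcal{H}}_{X,\nu}=\Tilde{\mathcal{H}}_{Y,\nu}\quad\Longleftrightarrow\quad \overline{X}\overset{D}{=}\overline{Y},
\]
so the whole statement reduces to showing that $\overline{X}\overset{D}{=}\overline{Y}$ holds precisely when $X$ and $Y$ are orthogonally equivalent in distribution. The ``if'' direction is the routine invariance computation: if $X\overset{D}{=}P'YP$ for an orthogonal $P$, then $\overline{X}=HXH'\overset{D}{=}HP'YPH'=(HP')\,Y\,(HP')'\overset{D}{=}\overline{Y}$, because Haar measure on $O(m)$ is invariant under right multiplication by the fixed orthogonal matrix $P'$ (if $P$ is random, condition on it first).

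For the ``only if'' direction, observe that the law of $\overline{X}$ is invariant under conjugation by every fixed element of $O(m)$, and any $O(m)$-conjugation-invariant probability measure on $\mathcal{P}_+^{m\times m}$ is determined by the law of its ordered eigenvalues: disintegrating over the eigenvalue vector $\Lambda$, the conditional law is supported on the single orbit $\{G\Lambda G':G\in O(m)\}$ and is conjugation-invariant, hence equals the unique $O(m)$-invariant probability measure on that orbit. Since conjugation preserves the spectrum, $\overline{X}\overset{D}{=}\overline{Y}$ forces $X$ and $Y$ to have the same eigenvalue distribution. From there one recovers the orthogonal equivalence by a transport argument: choosing measurable spectral decompositions $X=V_X\Lambda_X V_X'$ and $Y=V_Y\Lambda_Y V_Y'$ and an auxiliary uniform variable $\xi$ independent of $Y$, one sets $P=V_Y\,\Psi_{\Lambda_Y}(\xi)'$, where $\Psi_\ell$ is a jointly measurable family of maps transporting the law of $\xi$ onto the conditional law of $V_X$ given $\Lambda_X=\ell$; then $P'V_Y=\Psi_{\Lambda_Y}(\xi)$, which together with $\Lambda_Y\overset{D}{=}\Lambda_X$ yields $P'YP=\big(P'V_Y\big)\Lambda_Y\big(P'V_Y\big)'\overset{D}{=}V_X\Lambda_X V_X'=X$.

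The crux of the argument is this last reconstruction step: it needs a genuine disintegration of the laws of $X$ and $Y$ along their eigenvalue distributions, plus a measurable selection of the transport family $\{\Psi_\ell\}$, and it requires care on the set where eigenvalues coincide, since there the eigenvector frame is only defined modulo the stabilizer subgroup, so one argues on the relevant flag-manifold quotient (equivalently, directly on the orbit) rather than on $O(m)$ itself. Everything else is soft: the Fubini interchange is immediate from boundedness, and the invariance computations are one-liners. It is worth noting that $P$ must in general be allowed to be random (and dependent on $Y$) for the ``only if'' direction, as a constant $P$ does not suffice, so ``orthogonally equivalent in distribution'' is to be understood with $P$ a random orthogonal matrix.
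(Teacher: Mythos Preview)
Your argument follows the same skeleton as the paper's: use the integral representation $J_\nu(T,X)=\int_{O(m)}J_\nu(THXH')\,dH$ to rewrite $\Tilde{\mathcal{H}}_{X,\nu}$ as the ordinary Hankel transform of the orbit-randomized matrix $\overline{X}=HXH'$, then invoke Theorem~\ref{teoremaHankel1} to get $\overline{X}\overset{D}{=}\overline{Y}$, and finally pass from that to orthogonal equivalence of $X$ and $Y$. The paper does exactly this, though it phrases the first reduction slightly differently (it first writes $\Tilde{\mathcal{H}}_{X,\nu}(T)=\int_{O(m)}\mathcal{H}_{X,\nu}(HTH')\,dH$ and then moves the conjugation onto $X$ using that $A_\nu$ depends only on eigenvalues).

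Where you diverge from the paper is in the final step. The paper, having obtained $Z_1\overset{D}{=}Z_2$ with $Z_1=P_1'XP_1$ and $Z_2=P_2'YP_2$ for Haar $P_1,P_2$, simply asserts that $X$ is orthogonally equivalent to $Y$ and stops. You correctly observe that this does not immediately produce a \emph{fixed} orthogonal $P$ with $X\overset{D}{=}P'YP$, and indeed your counterexample-in-spirit (a deterministic diagonal matrix versus its Haar conjugate) shows that no fixed $P$ can work in general. Your disintegration-and-transport construction of a random $P$ is the honest way to close this, and your caveat that ``orthogonally equivalent in distribution'' must allow $P$ to be random and $Y$-measurable is a genuine clarification of the paper's definition. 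The measurable-selection and eigenvalue-coincidence issues you flag are real but standard; the argument goes through on the orbit space as you indicate.
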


\begin{proof}
From \cite[p. 260]{muirhead2009aspects}, we have that for every positive definite matrix $X$ and every symmetric matrix $Y$ the following equality holds:
\begin{equation}\label{ORTP}
   A_\nu(T, X) = \int_{\mathcal{O}(m)} A_\nu (HTH'X)dH=\int_{\mathcal{O}(m)} A_\nu (H'HTH'XH)dH=\int_{\mathcal{O}(m)} A_\nu (TH'XH)dH.
\end{equation}
The last equality emerges from the fact that matrices $X$ and $H'XH$ have the same eigenvalues if $H$ is an orthogonal matrix and $A_\nu$ is a function of eigenvalues.

Assume now that $P'YP=X$. We obtain
\begin{equation*}
   A_\nu(T, X) =\int_{\mathcal{O}(m)} A_\nu (TH'XH)dH=\int_{\mathcal{O}(m)} A_\nu (TH'P'YPH)dH=\int_{\mathcal{O}(m)} A_\nu (T(PH)'Y(PH))dH.
\end{equation*}
Note that the Haar measure is orthogonally invariant, therefore if we denote with $PH=H_1$, we obtain
\begin{align*}
       A_\nu(T, X)&=\int_{\mathcal{O}(m)} A_\nu (T(PH)'Y(PH))dH=\int_{\mathcal{O}(m)} A_\nu (TH_1'YH_1)dH_1=\\& \int_{\mathcal{O}(m)} A_\nu (HTH'Y)dH=A_\nu(T,Y).
\end{align*}
Therefore, directly from the definition of the orthogonal Hankel transform and using the fact that the Jacobian of the orthogonal transform is always equal to 1 (orthogonal transform is linear, the derivative of the linear map is equal to itself and the determinant of the orthogonal matrix equals 1), we get
\begin{align*}
    \Tilde{\mathcal{H}}_{X, \nu}(T)&=E_X\big(J_\nu(T,X)\big)=E_Y\big(J_\nu(T,P'YP)\big)=E_Y\big(J_\nu(T,Y)\big)=\Tilde{\mathcal{H}}_{Y, \nu}(T).
\end{align*}
Let's assume the equality of orthogonal Hankel transforms. The equality
\begin{align*}
    \Tilde{\mathcal{H}}_{X, \nu}(T)&=\int_{\mathcal{O}(m)}\mathcal{H}_{X, \nu}(HTH')dH
\end{align*}
holds because of the definition of Hankel transform and (\ref{ORTP}).

From the equality given above, and the definition of the orthogonally invariant Hankel transform, we have that
\begin{equation*}
    \Tilde{\mathcal{H}}_{X, \nu}(T)=E_XE_H\big(J_\nu(HTH'X)\big).
\end{equation*}
Now we use the fact that $A_\nu(\cdot)$ depends only on the eigenvalues of its arguments to obtain
\begin{equation*}
    \Tilde{\mathcal{H}}_{X, \nu}(T)=E_XE_H\big(J_\nu(THXH')\big),
\end{equation*}
but since $THXH'$ and $TZ_1$ for some $Z_1$ orthogonally equivalent to $X$ are symmetric matrices which have the same eigenvalues, and that the Jacobian of the orthogonal transform is equal to 1, we have:
\begin{equation*}
    \Tilde{\mathcal{H}}_{X, \nu}(T)=E_{Z_1}\big(J_\nu(TZ_1)\big)=\mathcal{H}_{Z_1, \nu}(T).
\end{equation*}
We obtain the similar result for $Y$, i.e. that $\Tilde{\mathcal{H}}_{Y, \nu}(T)=\mathcal{H}_{Z_2, \nu}(T)$,
and the equality of distributions $Z_1$ and $Z_2$ follows from Theorem \ref{teoremaHankel1}. Now, since $Z_1=P_1'XP_1$ and $Z_2=P_2'YP_2$, where $P_1, P_2\in \mathcal{O}(m)$, we get that $X$ is orthogonally equivalent to $Y$.
\end{proof}

Our focus will be on the orthogonally invariant Hankel transform. The rationale for this decision is based on the readily available theoretical framework presented in \cite{hadjicosta2020integral} and algorithmic solutions outlined in \cite{koev2006efficient}. Additionally, one may wish to orthogonally transform the data, such as by performing PCA or removing the dependence on the scale parameter. Orthogonal transforms are a popular method in finance as well \cite{shah2021principal, yang2015application}.

%and $Y_1, Y_2, \dots, Y_{n_2}$, the empirical orthogonal transform of order $\nu$ of $X_1, X_2, \dots, X_{n_1}$ can be defined as

The primary goal of this paper is to extend the work presented in \cite{hadjicosta2020integral} by developing an integral two-sample test of equality of two positive definite matrix-valued distributions using the properties of orthogonal Hankel transforms. In Section \ref{sec::teststat}, the test statistic are  presented. Its asymptotic properties are  studied in Section \ref{sec::asym}.  Section \ref{sec::power} is dedicated to the  power study, while  Section \ref{sec::realdata} contains real data examples that demonstrate an application of the proposed methodology.
\section{Test statistic}\label{sec::teststat}
 Let $X=X_1, X_2, \dots, X_{n_1}$ and $Y=Y_1, Y_2, \dots, Y_{n_2}$ be two independent random samples identically distributed as $X$ and $Y$, where $X$ and $Y$ are symmetric positive definite random matrices respectively. 
Based on these samples, we  present the novel test statistic for testing the null hypothesis \begin{align*}H_0:  X \text{ and } Y \text{ are orthogonally equivalent in distribution},\end{align*}
assuming $X$ and $Y$ are symmetric positive definite random matrices. From the results in Section \ref{sec::headings}, the null hypothesis can be reformulated as:
$$H_0:  \Tilde{\mathcal{H}}_{X, \nu}(T) =  \Tilde{\mathcal{H}}_{Y, \nu}(T),\; \text{ for all } T>0. $$ 

Since the notion of orthogonal equivalence in distribution is defined in terms of the equality of corresponding orthogonal Hankel transforms, a natural way to construct a test is based on the difference of appropriate empirical counterparts, i.e.  empirical orthogonal Hankel transforms of order $\nu$  given by 

\begin{equation}\label{XEHT}
   \Tilde{\mathcal{H}}_{n_1, \nu} (T)=\Gamma_m(\nu+\frac{1}{2}(m+1))\frac{1}{n_1}\sum\limits_{j=1}^{n_1} A_\nu (T, X_j) =\frac{1}{n_1}\sum\limits_{j=1}^{n_1} J_\nu (T, X_j)
\end{equation}
and 
\begin{equation}\label{YEHT}
   \Tilde{\mathcal{H}}_{n_2, \nu} (T)=\Gamma_m(\nu+\frac{1}{2}(m+1))\frac{1}{n_2}\sum\limits_{j=1}^{n_2} A_\nu (T, Y_j)=\frac{1}{n_2}\sum\limits_{j=1}^{n_2} J_\nu (T, Y_j).
\end{equation}
%and analogously, the empirical orthogonal transform of order $\nu$ of $Y_1, Y_2, \dots, Y_{n_2}$  can be defined as
%\begin{equation}\label{YEHT}
%\Tilde{\mathcal{H}}_{{n_2}, \nu} (T)=\Gamma_m(\nu+\frac{1}{2}(m+1))\frac{1}{n_2}\sum\limits_{j=1}^{n_2} A_\nu (T, Y_j).
%\end{equation} 
For all properties of this that we will use in the sequel we refer to \cite{hadjicosta2020integral}.
%The main result of this section is as follows:
%Let $X=X_1, X_2, \dots, X_{n_1}$ and $Y=Y_1, Y_2, \dots, Y_{n_2}$ be two independent random samples identically distributed as $X$ and $Y$, respectively. The statistic
It leads us to the following statistic
\begin{equation}\label{dvauzorka}
    I_{n_1, n_2,  \nu}=\int_{T>0}\Big(\Tilde{\mathcal{H}}_{{n_1}, \nu} (T)-\Tilde{\mathcal{H}}_{{n_2}, \nu} (T)\Big)^2dW(T),
\end{equation}
where  
$\Tilde{\mathcal{H}}_{n_1, \nu} (T)$ and $\Tilde{\mathcal{H}}_{n_2, \nu} (T)$  are  empirical orthogonal Hankel transform of $X$ and Y respectively, defined in  \eqref{XEHT}, and  \eqref{YEHT}, %$ \Tilde{\mathcal{H}}_{n_2, \nu} (T)$ 
and $dW(T)$ is a standard Wishart measure.

%empirical orthogonal Henkel transform of $Y$ analogously defined as
%\begin{equation*}%\label{YEHT}
 %  \Tilde{\mathcal{H}}_{n_2, \nu} (T)=\Gamma_m(\nu+\frac{1}{2}(m+1))\frac{1}{n_2}\sum\limits_{j=1}^{n_2} A_\nu (T, Y_j).
%\end{equation*}

%and analogously, the empirical orthogonal transform of order $\nu$ of $Y_1, Y_2, \dots, Y_{n_2}$  can be defined as
%\begin{equation}\label{YEHT}
%\Tilde{\mathcal{H}}_{{n_2}, \nu} (T)=\Gamma_m(\nu+\frac{1}{2}(m+1))\frac{1}{n_2}\sum\limits_{j=1}^{n_2} A_\nu (T, Y_j).
%\end{equation} 
For all properties of this transform that we will use in the sequel we refer to \cite{hadjicosta2020integral}.

%is a V-statistic with a symmetric kernel $\Phi_\nu$.

%We need to compute the integral
%\begin{align}\label{dvauzorka}
%I_{n_1, n_2,  \nu}=\int_{T>0}\Big(\Tilde{\mathcal{H}}_{{n_1}, \nu} (T)-\Tilde{\mathcal{H}}_{{n_2}, \nu} (T)\Big)^2dW(T),
%\end{align}
%where $dW(T)$ denotes the standard Wishart measure.
Using the result from \cite{hadjicosta2020integral} we have %[p. 1335]:
\begin{align}\label{kvadratni_deo}
    &\int_{T>0}(\Tilde{\mathcal{H}}_{{n_1}, \nu} (T))^2dW(T)=\frac{1}{n_1^2}\sum\limits_{i=1}^{n_1}\sum\limits_{j=1}^{n_1}\etr(-X_i-X_j)J_\nu(-X_i, X_j),\\
    &\int_{T>0}(\Tilde{\mathcal{H}}_{{n_2}, \nu} (T))^2dW(T)=\frac{1}{n_2^2}\sum\limits_{i=1}^{n_2}\sum\limits_{j=1}^{n_2}\etr(-Y_i-Y_j)J_\nu(-Y_i, Y_j).
\end{align}
Following the calculations in \cite{hadjicosta2020integral}, we obtain:
\begin{align*}
   &\int_{T>0}\Tilde{\mathcal{H}}_{{n_1}, \nu} (T)\Tilde{\mathcal{H}}_{{n_2}, \nu} (T)dW(T)=\\ &\frac{\Gamma_m(\nu+\frac{1}{2}(m+1))}{n_1n_2}\sum\limits_{i=1}^{n_1}\sum\limits_{j=1}^{n_2}\int_{T>0} A_\nu(T, X_i)A_\nu(T, Y_j)(\det T)^\nu \etr(-T)dT=\\
    &\frac{1}{n_1n_2}\sum\limits_{i=1}^{n_1}\sum\limits_{j=1}^{n_2}\etr(-X_i-Y_j)J_\nu (-X_i, Y_j),
\end{align*}
and similarly
\begin{align*}
   \int_{T>0}\Tilde{\mathcal{H}}_{{n_2}, \nu} (T)\Tilde{\mathcal{H}}_{{n_1}, \nu} (T)dW(T)= \frac{1}{n_1n_2}\sum\limits_{i=1}^{n_2}\sum\limits_{j=1}^{n_1}\etr(-Y_i-X_j)J_\nu (-Y_i, X_j).
\end{align*}
Finally, we obtain that (\ref{dvauzorka}) has the following form:
\begin{align*}
    I_{n_1, n_2}=\frac{1}{n_1^2n_2^2}\sum\limits_{i=1, j=1}^{n_1}\sum\limits_{l=1, k=1}^{n_2} \Phi_\nu(X_i, X_j;Y_k, Y_l),
\end{align*}
 where $\Phi_\nu(X_i, X_j;Y_k, Y_l)$ is of the form 
 
 \begin{align*}
\Phi_\nu(X_i, X_j;Y_k, Y_l)&= \etr(-X_i-X_j)J_\nu(-X_i, X_j)+\etr(-Y_k-Y_l)J_\nu(-Y_k, Y_l)\\&-\etr(-Y_k-X_i)J_\nu (-Y_k, X_i)-\etr(-X_i-Y_k)J_\nu (-X_i, Y_k).
 \end{align*}

The evaluation of the test statistic is computationally intensive due to the complexity of evaluating the functions involved, which increases with the dimensionality of the problem.    
\section{Asymptotic properties of the novel test} \label{sec::asym}
%introductory part - Hilbert space, representation
% theorem
% proof as in BT - JG, but to use that the sample paths have to be in L2
%finiteness of the norm of Bessel functions gives me this
%blimey covariance function
%CLT in a HS, neat result
%cont.map. th. and job done
In this section, we investigate the asymptotic behaviour of the test statistic.

It is noted in \cite{hadjicosta2020integral} that the space $L^2=L^2(W)$ of Borel measurable functions $f:\mathcal{P}^{m\times m}_+\to \mathbf{C}$ such that $\int_{X>0} |f(X)|^2dW(X)<\infty$  forms a separable Hilbert space, when equipped with the inner product
\begin{align*}
    \langle f, g \rangle =\int_{X>0} f(X)\overline{g(X)} dW(X).
\end{align*}
The norm in this space is defined as $\norm f=\sqrt{\langle f, f \rangle}$.
Assume $T$ is a symmetric positive definite $m\times m$ matrix. Let us define the stochastic process
\begin{equation*}
    \mathcal{Z}_{n_1, n_2,  \nu}(T)=\frac{1}{n_1}\sum\limits_{j=1}^{n_1} J_\nu (T, X_j)-\frac{1}{n_2}\sum\limits_{j=1}^{n_2} J_\nu (T, Y_j).
\end{equation*}
Note that although the process $\mathcal{Z}{n_1, n_2,\nu}$ depends on $\nu$, we drop the index $\nu$ for the sake of brevity. The same applies to the test statistic $I_{n_1, n_2, \nu}$.

The following inequality (\cite{hadjicosta2020integral}), %[p.1325] 
valid for every positive definite symmetric matrix $X$ and every positive definite symmetric matrix $Y$, will be of importance in bounding the norm of $\mathcal{Z}_{n_1, n_2}$:
\begin{equation}\label{nejednakost}
    |J_\nu (X, Y)|=\Gamma_m(\nu+\frac{1}{2}(m+1))|A_\nu (X, Y)|\leq \Gamma_m(\nu+\frac{1}{2}(m+1))(\Gamma_m(\nu+\frac{1}{2}(m+1)))^{-1}=1.
\end{equation}

By applying (\ref{nejednakost}) and using the triangle inequality, we obtain that
\begin{align*}
    |\mathcal{Z}_{n_1, n_2}(T)|\leq \frac{1}{n_1}\sum\limits_{j=1}^{n_1} |J_\nu (T, X_j)|+\frac{1}{n_2}\sum\limits_{j=1}^{n_2} |J_\nu (T, Y_j)|\leq 2.
\end{align*}

Now it follows that
\begin{equation}\label{normaOGR}
    \norm{\mathcal{Z}_{n_1, n_2}(T)}^2=\int_{T>0}(\mathcal{Z}_{n_1, n_2}(T))^2dW(X)\leq \int_{T>0} 4 dW(T)=4. 
\end{equation}
Therefore, the random field $\{\mathcal{Z}_{n_1, n_2}(T), T>0\}$ is a random element of $L^2$. 
The test statistic $I_{n_1, n_2}$ can be represented as $I_{n_1, n_2} =  \norm{\mathcal{Z}_{n_1, n_2}(T)}^2$. %, and therefore consists a random element in the Hilbert space $L^2$. 
Denote with $N=n_1+n_2$. We now formulate the main result of this section.
\begin{theorem}
Let $X_1, X_2, \dots, X_{n_1}$ and $Y_1, Y_2, \dots, Y_{n_2}$ be two sequences of independent orthogonally equivalent matrix random variables having the same orthogonal Hankel transform $\Tilde{\mathcal{H}}_{\nu}(T)$. Assume that $\frac{n_1}{N}\to \eta\in (0, 1)$ when $n_1, n_2\to\infty$. Then
\begin{equation*}
    \frac{n_1 n_2}{N} I_{n_1, n_2}\xrightarrow{D} \norm{\mathcal{Z}}^2,
\end{equation*}
where $\{\mathcal{Z}(T), T>0\}$ is a centered Gaussian process on $L^2$ with a covariance kernel
\begin{align*}
    \rho(S, T)=E[J_\nu(S, X)J_\nu(T, X)]-\Tilde{\mathcal{H}}_{ \nu}(S)\Tilde{\mathcal{H}}_{ \nu}(T).
\end{align*}
\end{theorem}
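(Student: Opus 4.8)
The plan is to recognize $\sqrt{n_1 n_2/N}\,\mathcal{Z}_{n_1,n_2}$ as a sum of two independent empirical processes indexed by $T>0$, establish a functional central limit theorem for it in the Hilbert space $L^2(W)$, and then invoke the continuous mapping theorem for the squared-norm functional $f\mapsto \norm{f}^2$, which is continuous on $L^2$. Concretely, write
\begin{align*}
    \sqrt{\frac{n_1 n_2}{N}}\,\mathcal{Z}_{n_1,n_2}(T)
    = \sqrt{\frac{n_2}{N}}\cdot\frac{1}{\sqrt{n_1}}\sum_{j=1}^{n_1}\bigl(J_\nu(T,X_j)-\Tilde{\mathcal{H}}_\nu(T)\bigr)
    - \sqrt{\frac{n_1}{N}}\cdot\frac{1}{\sqrt{n_2}}\sum_{j=1}^{n_2}\bigl(J_\nu(T,Y_j)-\Tilde{\mathcal{H}}_\nu(T)\bigr),
\end{align*}
where under $H_0$ the common mean is $\Tilde{\mathcal{H}}_\nu(T)$, so the centering terms cancel and the displayed expression equals $\sqrt{n_1 n_2/N}\,\mathcal{Z}_{n_1,n_2}(T)$ exactly.

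First I would verify that the summands are well-defined, mean-zero, square-integrable random elements of $L^2(W)$: the bound $|J_\nu(T,X)|\le 1$ from \eqref{nejednakost} gives $\norm{J_\nu(\cdot,X_j)}^2\le \int_{T>0}1\,dW(T)=1<\infty$ almost surely, and $E\norm{J_\nu(\cdot,X_j)-\Tilde{\mathcal{H}}_\nu}^2<\infty$ by Fubini and dominated convergence (the integrand is bounded by $4$). Hence the i.i.d.\ summands lie in $L^2(W)$ with finite second moment, and the Hilbert-space CLT (e.g.\ the one for i.i.d.\ sums in a separable Hilbert space) applies to each of the two normalized partial sums separately, yielding convergence in distribution to centered Gaussian elements $\mathcal{Z}^{(X)}$ and $\mathcal{Z}^{(Y)}$ of $L^2(W)$ with covariance operator induced by the kernel $\rho(S,T)=E[J_\nu(S,X)J_\nu(T,X)]-\Tilde{\mathcal{H}}_\nu(S)\Tilde{\mathcal{H}}_\nu(T)$ (the same kernel for both, since $X$ and $Y$ share the Hankel transform under $H_0$ — note orthogonal equivalence implies equality of the mixed moments $E[J_\nu(S,X)J_\nu(T,X)]$ as these depend only on eigenvalues). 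By independence of the two samples and Slutsky applied to the deterministic weights $\sqrt{n_2/N}\to\sqrt{1-\eta}$ and $\sqrt{n_1/N}\to\sqrt{\eta}$, the difference converges to $\sqrt{1-\eta}\,\mathcal{Z}^{(X)}-\sqrt{\eta}\,\mathcal{Z}^{(Y)}$, which is a centered Gaussian element of $L^2(W)$ whose covariance kernel is $\bigl((1-\eta)+\eta\bigr)\rho(S,T)=\rho(S,T)$; call this limit $\mathcal{Z}$. Finally, since $I_{n_1,n_2}=\norm{\mathcal{Z}_{n_1,n_2}}^2$, we have $\frac{n_1 n_2}{N}I_{n_1,n_2}=\norm{\sqrt{n_1 n_2/N}\,\mathcal{Z}_{n_1,n_2}}^2$, and the map $f\mapsto\norm f^2$ is continuous on $L^2(W)$, so the continuous mapping theorem gives $\frac{n_1 n_2}{N}I_{n_1,n_2}\xrightarrow{D}\norm{\mathcal{Z}}^2$.

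The main obstacle is justifying the functional CLT in $L^2(W)$ rigorously: one must confirm that the relevant CLT for i.i.d.\ sums in a separable Hilbert space applies, which requires only a finite second moment of the summand in the Hilbert norm — this is exactly what the uniform bound \eqref{nejednakost} provides, so there is no tightness issue beyond checking separability of $L^2(W)$ (already noted in the excerpt, following \cite{hadjicosta2020integral}) and finiteness of $E\norm{J_\nu(\cdot,X)-\Tilde{\mathcal{H}}_\nu}^2$. A secondary technical point is identifying the covariance operator of the limit with the integral operator having kernel $\rho$; this follows by computing $E\bigl[\langle \mathcal{Z}^{(X)},g\rangle\langle \mathcal{Z}^{(X)},h\rangle\bigr]$ for $g,h\in L^2(W)$ via Fubini and the single-observation covariance $\mathrm{Cov}\bigl(J_\nu(S,X),J_\nu(T,X)\bigr)=\rho(S,T)$, again legitimate because everything in sight is bounded. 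Once these two points are in place the argument is routine.
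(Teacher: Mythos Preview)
Your proposal is correct and follows essentially the same route as the paper's proof: the same decomposition into two centered empirical processes with weights $\sqrt{n_2/N}$ and $\sqrt{n_1/N}$, convergence of each to a Gaussian element of $L^2(W)$ with kernel $\rho$, combination via independence and the limit $n_1/N\to\eta$, and the continuous mapping theorem for $f\mapsto\norm{f}^2$. If anything, your write-up is more explicit than the paper's (which defers details to \cite{alba2017class}) in invoking the Hilbert-space CLT, checking the second-moment condition via the bound \eqref{nejednakost}, and noting why the two samples share the same covariance kernel under $H_0$.
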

\begin{proof}
The proof will follow one outlined in \cite{alba2017class}.

Let $\mathcal{Z}_{n_1, X}$ and $\mathcal{Z}_{n_1, Y}$ denote the following random processes: $\mathcal{Z}_{n_1, X}(T)=\sqrt{n_1}(\Tilde{\mathcal{H}}_{n_1, \nu}(T)-\Tilde{\mathcal{H}}_{ \nu}(T))$ and $\mathcal{Z}_{n_2, Y}(T)=\sqrt{n_2}(\Tilde{\mathcal{H}}_{n_2, \nu}(T)-\Tilde{\mathcal{H}}_{ \nu}(T))$ respectively. 

Note that under $H_0$ we can write
\begin{equation*}
  \frac{n_1 n_2}{N} I_{n_1, n_2}=\norm[\bigg]{\sqrt{\frac{n_2}{N}}\mathcal{Z}_{n_1, X}-\sqrt{\frac{n_1}{N}}\mathcal{Z}_{n_2, Y}}^2.
\end{equation*}

%Note that 
It is worth mentioning that if constants $p$ and $q$ satisfy $p^2+q^2=1$, the process $\mathcal{Z}=p\mathcal{Z}_{1}+q\mathcal{Z}_{2}$ has the covariance structure $\rho$. Since $\sqrt{\frac{n_1}{N}}^2+\sqrt{\frac{n_2}{N}}^2=1$ and $\frac{n_1}{N}\to \eta\in (0, 1)$, it follows that $\{\sqrt{\frac{n_2}{N}}\mathcal{Z}_{n_1, X}-\sqrt{\frac{n_1}{N}}\mathcal{Z}_{n_2, Y}\}$ converges in distribution to the Gaussian process $\mathcal{Z(T)}$ with the covariance kernel $\rho$. The result of the theorem follows from the continuous mapping theorem.

Noting that for every $S>0$,  $E(\Gamma_m(\nu+\frac{1}{2}(m+1))A_\nu(S, X)-\Tilde{\mathcal{H}}_{\nu}(S))=0$, direct computation yields
\begin{align*}
    \rho(S, T)=&\text{Cov}(\Tilde{\mathcal{H}}_{n_1, \nu}(S)-\Tilde{\mathcal{H}}_{ \nu}(S), \Tilde{\mathcal{H}}_{n_1, \nu}(T)-\Tilde{\mathcal{H}}_{ \nu}(T))\\&=E[(J_\nu(S, X)-\Tilde{\mathcal{H}}_{ \nu}(S))\times(J_\nu(T, X)-\Tilde{\mathcal{H}}_{ \nu}(T))]=E[J_\nu(S, X)J_\nu(T, X)]-\Tilde{\mathcal{H}}_{ \nu}(S)\Tilde{\mathcal{H}}_{ \nu}(T).
\end{align*}
\end{proof}
It is important to emphasize that the null distribution of our test statistics depends on  the underlying distributions of $X$ and $Y$, therefore the application of certain approximation techniques is necessary to perform testing in practice. This issue will be addressed in the next section.

\section{Power study}\label{sec::power}
For practical use in dimensions $m=2$ and $m=3$, the parameter $\nu$ could be fixed to $\nu=1$, as is common practice in problems of this nature \cite{baringhaus2010empirical}. 

In this section, we present the results of the power study. The empirical powers are obtained using a warp-speed bootstrap algorithm (see, e.g., \cite{giacomini2013warp})  with $N=10000$ replications. { The method owes its name to the significant computational savings it offers in comparison to the traditional bootstrap approach.}
We provide the pseudocode for the warp-speed bootstrap algorithm below. The computation is done using MATLAB \cite{MATLAB}.
\begin{algorithm}
  \caption{Warp-speed bootstrap algorithm}
  \begin{algorithmic}[1]
 % \For{$i$ from 1 to $N$}
    \State Sample $\textbf{x}=(x_1, \dots, x_{n_1})$ from $ F_X$ and $\mathbf{y}=y_1, \dots, y_n$ from $F_Y$;
    \State Compute $I_{n_1, n_2, \nu}(\textbf{x}, \textbf{y})$;
    \State Generate bootstrap samples $\textbf{x*}=(x^*_1, \dots, x^*_{n_1})$  and  $\textbf{y*}=y^*_1, \dots, y^*_{n_2}$ from $F_{n_1+n_2}$- sampling distribution based on the joint sample $(\textbf{x},\textbf{y})$; 
    %$\textbf{b}=b_1, \dots, b_{n_1+n_2}$, which is a random permutation of the vector $(\textbf{x}, \textbf{y})$.
    \State Compute $I_{n_1, n_2, \nu}(\textbf{x}*,\textbf{y}*)$;
    \State Repeat steps 1-4 N times and obtain two sequences of statistics $\{I_{n_1,n_2,\nu}^{(j)}\}$ and $\{I_{n_1,n_2,\nu}^{*(j)}\}$,  $j=1,...,N$; 
    \State Reject the null hypothesis for the $j$--sample ($j=1,...,N$), if $I_{n_1,n_2,\nu}^{(j)}>c_\alpha$,  where $c_\alpha$ denotes the $(1-\alpha)\%$ quantile of the empirical distribution of the bootstrap test statistics  $(I_{n_1,n_2,\nu}^{*(j)}, \ j=1,...,N)$.
     
  %  \EndFor
  %  \State Return the empirical test power of testing \texttt{Alternative} against \texttt{Null} with the level of significance $\alpha$.
  \end{algorithmic}
\end{algorithm}

The level of significance is set to $\alpha=0.05$, and large values of the test statistic are considered to be significant. The algorithm developed in \cite{koev2006efficient} was implemented to evaluate the Bessel functions of two matrix arguments. { In all cases, we assume $d$ denotes the dimension of the respective matrices. When estimating sample covariance matrices, samples of dimension $d$ have been considered.} The following distributions were considered:
\begin{enumerate}
    \item Wishart distribution with the shape parameter $a$ and the scale matrix $\Sigma$, denoted by $W_d(a, \Sigma)${, with a density
    \begin{equation*}
    f_{W, a, \Sigma}(X) = \frac{1}{\Gamma_d(a)}(\det\Sigma)^a(\det X)^{a-\frac{1}{2}(d+1)}\etr(-\Sigma X);
    \end{equation*}
    
    }
    \item Inverse Wishart distribution with the shape parameter $a$ and the scale matrix $\Sigma$, denoted by $IW_d(a, \Sigma)$, { with a density
    \begin{equation*}
        f_{IW, a, \Sigma} (X)=\dfrac{(\det \Sigma)^{\frac{a}{2}}\etr(-\frac{1}{2}\Sigma X^{-1})}{2^{\frac{a d}{2}} \Gamma_d(\frac{a}{2})(\det X)^{\frac{a+d+1}{2}}};
    \end{equation*}
    }
    \item Sample covariance matrix distributions obtained from the uniform vectors $(U_1, \dots, U_d)$, where $U_i\in \mathcal{U}[0, 1]$, denoted by $CMU_d$, { with a density
    \begin{equation*}
        f_{(U_1, \dots, U_d)} ((x_1, \dots, x_d))=1, \; x_i\in [0, 1],\; 1\leq i \leq d;
    \end{equation*}
}
 \item Sample covariance matrix distributions obtained from the random vectors having the multivariate  $t$ distribution  with $a$ degrees of freedom, denoted by $CMT_d(a, \Sigma)$, { with a density
\begin{equation*}
        f_{t, a} (x)=\frac{1}{(\det(\Sigma))^\frac{1}{2}}\frac{\Gamma(\frac{a+d}{2})}{\Gamma(\frac{d}{2})(a\pi)^\frac{d}{2}}(1+\frac{x'\Sigma^{-1}x}{a})^{-\frac{a+d}{2}}.
    \end{equation*} }
\end{enumerate}

Whenever a matrix in Table \ref{pow2} or \ref{pow3} is symmetric, we leave the lower part of the table empty. It is significant that the test is able to differentiate between the different distributions with the same expectation ($W_{2}(2.5, I_2)$ versus $IW_2(4, 2.5I_2)$ and $W_3(3, I_3)$ versus $IW_3(5, 3I_3)$) with a fair degree of accuracy. %The approximation method appears to be imprecise when attempting to approximate power values around 5 percent. 
%The test has a difficulty distinguishing close distributions in small sample sizes. 
The power of the test generally decreases with an increase in the number of dimensions. Nevertheless, the test appears to be well-calibrated, and the bootstrap approximation does not suffer from size distortions. %The question of test power in higher dimensions remains unresolved due to the computationally intensive nature of the calculations involved. 
%Furthermore, upon determining the test powers $\nu$, empirical results suggest that changing the value of $\nu$ does not significantly affect the test performance. Therefore, these results have been omitted from the paper.

Denote with $K_2$ the following covariance matrix:
$K_2=\begin{bmatrix}
\cos(0.7) & \sin(0.7) \\
 \sin (0.7) &\cos(0.7) 
\end{bmatrix}$
and denote with $K_3$ the following covariance matrix:
$K_3=\begin{bmatrix}
1 & -1 & 0.95\\
-1 & 5 & 0.01\\
0.95 & 0.01 & 7
\end{bmatrix}.$
%probably in a conclusion
    \begin{table}[htbp] %don't know how to position the tabkes/figures where I want them to be
\caption{Powers of the test for different sample sizes for $2\times 2$ matrices.}\label{pow2}
\begin{adjustbox}{width=\linewidth,center}

\begin{tabular}{l|lllllllllll}
\hline
$n_1=20, n_2=20$ & $W_{2}(2.5, I_2)$ & $IW_{2}(2.5, I_2)$ & $CMT_2(1, I_2)$ &$CMU_2$ & $W_{2}(2.5, 2I_2)$ & $IW_{2}(4, 2.5I_2)$ & $W_2(2.5, K_2)$ & $CMT_2(3, K_2)$ & $CMT_2(5, K_2)$ & $CMT_2(3, I_2)$ & $CMT_2(5, I_2)$ \\ \hline
$W_{2}(2.5, I_2)$ & 5 & 39 & 12 & 100 & 33 & 41 & 45  & 87 & 92 & 67 & 87 \\
$IW_{2}(2.5, I_2)$ &  & 5 & 10 & 100 & 82 & 7 & 8 & 44 & 60 & 24 & 33\\
$CMT_2(1, I_2)$ &  &  & 5 &100  & 36 & 7 & 14 & 62 & 74 & 37 & 49\\
$CMU_2$ &  &  &  & 5 & 100 & 100  & 100 & 100 & 99 & 100 & 100 \\
$W_{2}(2.5, 2I_2)$ &  &  &  &  & 4 & 96 & 89 & 97 & 99 & 91 & 95 \\
$IW_{2}(4, 2.5I_2)$ &  &  &  &  &  &  5 & 8 & 57 & 73 & 33 & 48 \\ 
$W_2(2.5, K_2)$ &  &  &  &  &  &   & 5  & 49 & 61 & 28 & 41  \\
$CMT_2(3, K_2)$ &  &  &  &  &  &   &   & 5 & 6 & 10  & 16 \\
$CMT_2(5, K_2)$ &  &  &  &  &  &   &   &  & 5 & 7 & 11 \\
$CMT_2(3, I_2)$ &  &  &  &  &  &   &   &  &  & 5 & 6 \\
$CMT_2(5, I_2)$ &  &  &  &  &  &   &   &  &  & & 5 \\

\hline
\end{tabular}
\end{adjustbox}
\medskip 
\begin{adjustbox}{width=\linewidth,center}
\begin{tabular}{l|lllllllllll}
\hline
$n_1=30, n_2=20$ & $W_{2}(2.5, I_2)$ & $IW_{2}(2.5, I_2)$ & $CMT_2(1, I_2)$ &$CMU_2$& $W_{2}(2.5, 2I_2)$ & $IW_{2}(4, 2.5I_2)$ & $W_2(2.5, K_2)$ &  $CMT_2(3, K_2)$ & $CMT_2(5, K_2)$ & $CMT_2(3, I_2)$ & $CMT_2(5, I_2)$\\ \hline
$W_{2}(2.5, I_2)$ & 5 & 49 & 19 & 100 & 38 & 46  & 57 & 93 & 97 & 79 &  89 \\
$IW_{2}(2.5, I_2)$ & 44 & 5 & 12 & 100 & 90 & 6 & 9 & 55 & 70 & 29 &  43 \\
$CMT_2(1, I_2)$ & 9 & 11 & 5  & 100  & 34 & 6 & 17 & 71 & 81 & 45 & 59 \\
$CMU_2$ & 100 & 100 & 100 & 5 & 100 & 100 & 100  & 100 & 100 & 100 &  100\\
$W_{2}(2.5, 2I_2)$ & 45 & 91 & 48 & 100  & 4 & 97 & 96 & 99 & 100 &  96 &  98\\
$IW_{2}(4, 2.5I_2)$ & 46 & 10 & 11 & 100  & 96 & 5 & 11 & 70 & 82 & 44 & 62 \\
$W_2(2.5, K_2)$ & 50 & 8 & 19 & 100 & 94 & 7  & 5 & 59 & 71 & 37 & 50 \\
$CMT_2(3, K_2)$ & 92 & 48 & 66 & 100 & 99 & 63  & 53  & 6 & 7 & 10 & 6\\
$CMT_2(5, K_2)$ & 97 & 66 & 80 & 100 & 100 & 79  & 68  & 6 & 5 &  19 &  11 \\
$CMT_2(3, I_2)$ & 75 & 22 & 42 & 100 & 95 & 35  & 31  & 12 & 21 & 5 & 7 \\
$CMT_2(5, I_2)$ & 87 & 36 & 58 & 100 & 98 & 51  & 42  & 8 & 13 & 6 & 5 \\
\hline
\end{tabular}
\end{adjustbox}
\medskip
\begin{adjustbox}{width=\linewidth,center}
\begin{tabular}{l|lllllllllll}
\hline
$n_1=50, n_2=20$ & $W_{2}(2.5, I_2)$ & $IW_{2}(2.5, I_2)$ & $CMT_2(1, I_2)$ &$CMU_2$& $W_{2}(2.5, 2I_2)$ & $IW_{2}(4, 2.5I_2)$ & $W_2(2.5, K_2)$ &  $CMT_2(3, K_2)$ & $CMT_2(5, K_2)$ & $CMT_2(3, I_2)$ & $CMT_2(5, I_2)$\\ \hline
$W_{2}(2.5, I_2)$ & 5 & 61 & 24 & 100 & 43 & 55 & 68 & 97 & 99 & 90 & 94 \\
$IW_{2}(2.5, I_2)$ & 50 & 5 & 16 & 100 & 95 & 5 & 10 & 66 & 81 & 37 & 53\\
$CMT_2(1, I_2)$ & 10 & 11 & 5  & 100  & 37 & 5 & 19 & 80 & 90 & 55 & 68 \\
$CMU_2$ & 100 & 100 & 100 & 5 & 100 & 100  & 100 & 100 & 100 & 100 & 100\\
$W_{2}(2.5, 2I_2)$ & 55 & 96 & 62 & 100  & 4 & 99 & 99 & 100 & 100 & 99 & 100 \\ 
$IW_{2}(4, 2.5I_2)$ & 53 & 11 & 20 & 100  & 98 & 5 & 14 & 80 & 90 & 54 & 70  \\ 
$W_2(2.5, K_2)$ & 56 & 8 & 26 & 100 & 98 & 7 & 5 & 70 & 82 & 46 & 60 \\
$CMT_2(3, K_2)$ & 96 & 56 & 75 & 100 & 100 & 68  & 58  & 5 & 7 & 11 & 6  \\
$CMT_2(5, K_2)$ & 99 & 73 & 87 & 100 & 100 & 85  & 77  & 7 & 5 & 22 & 12 \\
$CMT_2(3, I_2)$ & 83 & 23 & 47 & 100 & 99 & 37  & 22  & 13 & 23 & 5 & 8 \\
$CMT_2(5, I_2)$ & 93 & 39 & 64 & 100 & 100 & 55  & 50  & 8 & 14 &  6 & 5 \\
\hline
\end{tabular}
\end{adjustbox}
\medskip 
\begin{adjustbox}{width=\linewidth,center}
\begin{tabular}{l|llllllllllll}
\hline
$n_1=50, n_2=30$ & $W_{2}(2.5, I_2)$ & $IW_{2}(2.5, I_2)$ & $CMT_2(1, I_2)$ &$CMU_2$ & $W_{2}(2.5, 2I_2)$  & $IW_{2}(4, 2.5I_2)$ & $W_2(2.5, K_2)$ &  $CMT_2(3, K_2)$ & $CMT_2(5, K_2)$ & $CMT_2(3, I_2)$ & $CMT_2(5, I_2)$  \\ \hline
$W_{2}(2.5, I_2)$ & 5 & 68 & 25 & 100 & 59 & 68 & 80 & 99 & 100 & 94 & 98  \\
$IW_{2}(2.5, I_2)$ & 64 & 5 & 19 & 100 & 99 & 7 & 12 & 76 & 89 & 41 & 60 \\
$CMT_2(1, I_2)$ & 15 & 14 & 5  & 100  & 58 & 8 & 28 & 89 & 95 & 62 & 78 \\
$CMU_2$ & 100 & 100 & 100 & 5 & 100 & 100 & 100 & 100 & 100 & 100 & 100\\
$W_{2}(2.5, 2I_2)$ & 66 & 99 & 68 & 100  & 5 & 100 & 100  &100 & 100 & 100 & 100\\
$IW_{2}(4, 2.5I_2)$ & 64 & 10 & 17 & 100  & 100 & 5 & 14 & 87 & 95 & 61 & 77  \\
$W_2(2.5, K_2)$ & 73 & 10 & 36 & 100 & 100 & 8 & 5  & 78 & 90 & 54 & 69\\
$CMT_2(3, K_2)$ & 99 & 69 & 85 & 100 & 100 & 84  & 75  & 5 & 7 & 14 & 7 \\
$CMT_2(5, K_2)$ & 100 & 86 & 95 & 100 & 100 & 94  & 88  & 7 & 5 & 29 & 16 \\
$CMT_2(3, I_2)$ & 93 & 33 & 61 & 100 & 100 & 51  & 45  & 17 & 30 & 5 & 8 \\
$CMT_2(5, I_2)$ & 98 & 55 & 76 & 100 & 100 & 73  & 66  & 8 & 17 & 8 & 5 \\
\hline

\end{tabular}
\end{adjustbox}
\medskip 
\begin{adjustbox}{width=\linewidth,center}
\begin{tabular}{l|llllllllllll}
\hline
$n_1=50, n_2=50$ & $W_{2}(2.5, I_2)$ & $IW_{2}(2.5, I_2)$ & $CMT_2(1, I_2)$ &$CMU_2$ & $W_{2}(2.5, 2I_2)$ & $IW_{2}(4, 2.5I_2)$ & $W_2(2.5, K_2)$&  $CMT_2(3, K_2)$ & $CMT_2(5, K_2)$  & $CMT_2(3, I_2)$ & $CMT_2(5, I_2)$  \\ \hline
$W_{2}(2.5, I_2)$ & 5 & 79 & 26 & 100 & 77 & 78 & 89 & 100 & 100 & 98 & 100\\
$IW_{2}(2.5, I_2)$ &  & 5 & 19 & 100 & 100 & 9 & 15 & 84 & 95 & 48 & 71\\
$CMT_2(1, I_2)$ &  &  & 6 &100  & 78 & 16 & 49 & 95 & 99 & 74 & 89\\
$CMU_2$ &  &  &  & 5 & 100 & 100 & 100 & 100 & 100 & 100 & 100 \\
$W_{2}(2.5, 2I_2)$ &  &  &  &  & 4  & 100 & 100 & 100 & 100 & 100 & 100\\ 
$IW_{2}(4, 2.5I_2)$ &  &  &  &  &   & 5 & 15 & 94 & 99 & 69 & 86\\
$W_2(2.5, K_2)$ &  &  &  &  &   &  & 5 & 87 & 95 & 64 & 80\\
$CMT_2(3, K_2)$ &  &  &  &  &  &   &   & 5 & 8 & 19 & 37 \\
$CMT_2(5, K_2)$ &  &  &  &  &  &   &   &  & 5 & 9 & 20 \\
$CMT_2(3, I_2)$ &  &  &  &  &  &   &   &  &  & 5 & 9 \\
$CMT_2(5, I_2)$ &  &  &  &  &  &   &   &  &  & & 5 \\
\hline
\end{tabular}
\end{adjustbox}
\end{table}

% $CMT_2(3, K_2)$ and $CMT_2(5, K_2)$ $n_1=n_2=100$ power 0.1190
% $IW_{2}(2.5, I_2)$ and $IW_{2}(4, 2.5I_2)$ $n_1=n_2=100$ power 0.1436

\begin{table}[htbp]
\caption{Powers of the test for different sample sizes for $3\times 3$ matrices.}\label{pow3}
\begin{adjustbox}{width=\linewidth,center}
\begin{tabular}{l|lllllllllllll}
\hline
$n_1=20, n_2=20$ & $W_3(3, I_3)$ & $IW_3(3, I_3)$ & $CMT_3(1, I_3)$ &$CMU_3$& $W_3(3, 2I_3)$ & $IW_3(5, 3I_3)$ & $W_3(3, K_3)$ & $CMT_3(3, K_3)$ & $CMT_3(5, K_3)$  & $CMT_3(3, I_3)$ & $CMT_3(5, I_3)$\\ \hline
$W_3(3, I_3)$ & 5 & 25 & 6 & 100 & 16 & 51 & 19 & 57 & 71 & 52 & 69 \\
$IW_3(3, I_3)$ &  & 5  & 20 & 100 & 39  & 5 & 42 & 22 & 35 & 17 & 29\\
$CMT_3(1, I_3)$ &  &  & 5  & 100  & 10  & 33 & 11 & 51 & 67 & 47 & 61 \\
$CMU_3$ &  &  &  &  5 &  100 & 100 & 100 & 100 & 100 & 100 & 100\\
$W_3(3, 2I_3)$ &  &  &  &  &  6 & 80 & 8 & 63 & 80 & 63 & 78 \\
$IW_3(5, 3I_3)$ &  &  &  &  &  &  5  & 80 & 23 & 36 & 19 & 29\\
$W_3(3, K_3)$ &  &  &  &  &  &    & 6 & 66 & 80 & 63 & 78\\
$CMT_3(3, K_3)$ &  &  &  &  &  &   &   & 4 & 6 & 5 & 8  \\
$CMT_3(5, K_3)$ &  &  &  &  &  &   &   &  & 5 & 5 & 6 \\
$CMT_3(3, I_3)$ &  &  &  &  &  &   &   &  &  & 5 & 6 \\
$CMT_3(5, I_3)$ &  &  &  &  &  &   &   &  &  &  & 5 \\
\hline

\end{tabular}
\end{adjustbox}
\medskip 
\begin{adjustbox}{width=\linewidth,center}
\begin{tabular}{l|lllllllllll}
\hline
$n_1=30, n_2=20$ & $W_3(3, I_3)$ & $IW_3(3, I_3)$ & $CMT_3(1, I_3)$ &$CMU_3$& $W_3(3, 2I_3)$ & $IW_3(5, 3I_3)$ & $W_3(3, K_3)$ & $CMT_3(3, K_3)$ & $CMT_3(5, K_3)$ & $CMT_3(3, I_3)$ & $CMT_3(5, I_3)$\\ \hline
$W_3(3, I_3)$ & 4 & 36 & 7 & 100 & 14 & 65 & 17 & 70 & 85 & 66 & 80 \\
$IW_3(3, I_3)$ & 25 & 4 & 21 & 100 & 41 & 3 & 45 & 29 & 43 & 23 & 38 \\
$CMT_3(1, I_3)$ & 4 & 27 & 5  & 100  & 8 & 43 & 8 & 64 & 78 & 60 & 75  \\
$CMU_3$ & 100 & 100 & 100 & 5 & 100 & 100 & 100 & 100 & 100 & 100 & 100 \\
$W_3(3, 2I_3)$ & 23 & 54 & 16 & 100  & 6 & 89 & 7 & 78 & 90 & 77 & 88\\
$IW_3(5, 3I_3)$ & 56 & 7 & 44 & 100  & 84 & 5 & 88 & 33 & 48 & 26 & 43 \\
$W_3(3, K_3)$ & 30 & 57 & 17 & 100 & 10 &  90  & 6 & 79 & 92 & 79 & 89\\
$CMT_3(3, K_3)$ & 62 & 21 & 56 & 100 & 71 & 20  & 74  & 5 & 7 & 5 & 10  \\
$CMT_3(5, K_3)$ & 80 & 36 & 73 & 100 & 88 & 38  & 88  & 6 & 5 & 5 & 6 \\
$CMT_3(3, I_3)$ & 57 & 17 & 53 & 100 & 67 & 18  & 71  & 5 & 5 & 5 & 7 \\
$CMT_3(5, I_3)$ & 74 & 31 & 70 & 100 & 87 & 31  & 87  & 8 & 6 & 7 & 5 \\

\hline
\end{tabular}
\end{adjustbox}
\medskip
\begin{adjustbox}{width=\linewidth,center}
\begin{tabular}{l|lllllllllll}
\hline
$n_1=50, n_2=20$ & $W_3(3, I_3)$ & $IW_3(3, I_3)$ & $CMT_3(1, I_3)$ &$CMU_3$ & $W_3(3, 2I_3)$  & $IW_3(5, 3I_3)$ & $W_3(3, K_3)$ & $CMT_3(3, K_3)$ & $CMT_3(5, K_3)$ & $CMT_3(3, I_3)$ & $CMT_3(5, I_3)$ \\ \hline
$W_3(3, I_3)$ & 5 & 52 & 10 & 100 & 12 & 75 & 14 & 83 & 93 & 81 & 91 \\
$IW_3(3, I_3)$ & 26 & 4 & 21 & 100 & 46 & 3 & 48 & 36 & 55 & 31 & 48\\
$CMT_3(1, I_3)$ & 2 & 37 & 5  & 100  & 4 & 50  & 6 & 78 & 89 & 73 & 88 \\
$CMU_3$ & 100 & 100 & 100 & 5 & 100 & 100 & 100 & 100 & 100 & 100 & 100 \\
$W_3(3, 2I_3)$ & 36 & 72 & 23 & 100  & 5 & 96 & 4 & 90 & 97 & 90 & 96\\
$IW_3(5, 3I_3)$ & 66 & 9 & 53 & 100  & 93 & 5 & 95 & 44 & 62 & 36 & 56\\ 
$W_3(3, K_3)$ & 45 & 73 & 25 & 100 & 13 &  97  & 6 & 92 & 97 & 90 & 96 \\ 
$CMT_3(3, K_3)$ & 70 & 22 & 60 & 100 & 83 & 20  & 83  & 5 & 8 & 5 & 6  \\
$CMT_3(5, K_3)$ & 86 & 41 & 82 & 100 & 95 & 40  & 95  & 6 & 5 & 8 & 4 \\
$CMT_3(3, I_3)$ & 65 & 17 & 61 & 100 & 81 & 15  & 81  & 6 & 12 & 5 & 8 \\
$CMT_3(5, I_3)$ & 84 & 33 & 78 & 100 & 94 & 32  & 96  & 5 & 7 & 7 & 5 \\
\hline
\end{tabular}
\end{adjustbox}
\medskip 
\begin{adjustbox}{width=\linewidth,center}
\begin{tabular}{l|lllllllllll}
\hline
$n_1=50, n_2=30$ & $W_3(3, I_3)$ & $IW_3(3, I_3)$ & $CMT_3(1, I_3)$ &$CMU_3$ & $W_3(3, 2I_3)$  & $IW_3(5, 3I_3)$ & $W_3(3, K_3)$ & $CMT_3(3, K_3)$ & $CMT_3(5, K_3)$ & $CMT_3(3, I_3)$ & $CMT_3(5, I_3)$  \\ \hline
$W_3(3, I_3)$ & 4 & 55 & 8 & 100 & 21 & 84 & 29 & 90 & 97 & 87 & 97  \\
$IW_3(3, I_3)$ & 44 & 5 & 33 & 100 & 70 & 4  & 70 & 43 & 62 & 35 & 54\\
$CMT_3(1, I_3)$ & 3 & 41 & 5  & 100  & 10 & 61 & 11 & 85 & 95 & 82 & 93 \\
$CMU_3$ & 100 & 100 & 100 & 5 & 100 & 100 & 100 & 100 & 100 & 100 & 100 \\
$W_3(3, 2I_3)$ & 37 & 77 & 22 & 100  & 5 & 99 & 5 & 96 & 99 & 95 & 99\\
$IW_3(5, 3I_3)$ & 79 & 7 & 63 & 100  & 99 & 5 & 99 & 47 & 67 & 40 &  60  \\
$W_3(3, K_3)$ & 47 & 80 & 24 & 100 & 11 &  99  & 6  & 96 & 99 &  96 & 99\\
$CMT_3(3, K_3)$ & 86 & 33 & 80 & 100 & 94 & 32  & 94  & 5 & 8 & 5 &  5 \\
$CMT_3(5, K_3)$ & 96 & 55 & 93 & 100 & 99 & 60  & 99  & 7 & 5 &  11 & 6 \\
$CMT_3(3, I_3)$ & 83 & 26 & 76 & 100 & 94 & 25  & 94  & 6 & 12 & 5 & 9  \\
$CMT_3(5, I_3)$ & 95 & 48 & 91 & 100 & 99 & 51  & 99  & 6 & 7 & 7 & 6 \\
\hline
\end{tabular}
\end{adjustbox}
\medskip 
\begin{adjustbox}{width=\linewidth,center}
\begin{tabular}{l|llllllllllll}
\hline
$n_1=50, n_2=50$ & $W_3(3, I_3)$ & $IW_3(3, I_3)$ & $CMT_3(1, I_3)$ &$CMU_3$ & $W_3(3, 2I_3)$ & $IW_3(5, 3I_3)$ & $W_3(3, K_3)$ & $CMT_3(3, K_3)$ & $CMT_3(5, K_3)$ & $CMT_3(3, I_3)$ & $CMT_3(5, I_3)$\\ \hline
$W_3(3, I_3)$ & 5 & 63 & 6 & 100 & 39 & 92 & 50 & 96 & 99 & 95 & 99\\
$IW_3(3, I_3)$ &  &  6 & 48  &  100 &  86 & 5 & 88 & 47 & 73 & 39 & 64 \\
$CMT_3(1, I_3)$ &  &  &  4 & 100  & 19   & 76 & 21 & 93 & 98 & 90 & 98\\
$CMU_3$ &  &  &  & 5  & 100  & 100 & 100 & 100 & 100 & 100 & 100 \\
$W_3(3, 2I_3)$ &  &  &  &  &  5  & 100 & 9 & 99 & 100 & 99 & 100 \\ 
$IW_3(5, 3I_3)$ &  &  &  &  &   & 5 & 100 & 54 & 78 & 43 & 68 \\ 
$W_3(3, K_3)$ &  &  &  &  &  &    & 6 & 99 & 100 & 99 & 100 \\ 
$CMT_3(3, K_3)$ &  &  &  &  &  &   &   & 5 & 9 & 6 & 15 \\
$CMT_3(5, K_3)$ &  &  &  &  &  &   &   &  & 5 & 5 & 6 \\
$CMT_3(3, I_3)$ &  &  &  &  &  &   &   &  &  & 5 & 9 \\
$CMT_3(5, I_3)$ &  &  &  &  &  &   &   &  &  &  & 6 \\

\hline
\end{tabular}
\end{adjustbox}
\end{table}

% CMT_3(3, K_3) vs CMT_3(5, K_3) n_1 = n_2 = 100 power: 0.1343
% IW_3(3, I_3) vs IW_3(5, 3I_3) n_1 = n_2 = 100 power: 0.0578
% IW_3(3, I_3) vs IW_3(5, 3I_3) n_1 = n_2 = 200 power: 0.0795
% IW_3(3, I_3) vs IW_3(5, 3I_3) n_1 = n_2 = 300 power: 0.0806
% IW_3(3, I_3) vs IW_3(5, 3I_3) n_1 = n_2 = 400 power: 0.1121

\section{Real data examples}\label{sec::realdata} 
Matrix-valued distributions have been used recently to model stock market data \cite{hadjicosta2020integral, haff2011minimax}. However, despite the popularity of cryptocurrencies in the scientific community, such methods have not yet been implemented for the cryptocurrency market. The extreme volatility of cryptocurrency markets \cite{liu2019volatility} makes it essential for traders to be aware of any significant changes in the statistical properties of assets. The properties of logarithmic returns can be of interest in analysing the cryptocurrency market \cite{bibinger2019estimating, chu2015statistical, pennoni2022exploring}.
{ The correlation structure of major cryptocurrency prices was investigated in \cite{kumar2019}.}

We have looked at hourly data of two of the biggest cryptocurrencies, Bitcoin (BTC) and Ethereum (ETH). The data was downloaded from Gemini (\href{http://www.gemini.com}{http://www.gemini.com}). We have selected two periods. The first period consists of days between 01 January 2019 and 01 March 2019. In this period, the market dynamic experienced no significant changes (see Fig. \ref{fig:foobar} and \ref{fig:fooLR}). The second one consists of days between 01 April 2021 and 01 June 2021. In this period, the market experienced a positive bubble followed by a negative bubble \cite{shu20212021}.  The first period consists of $n_1=1416$ points, while the second period consists of $n_2=1464$ points. Daily observations consist of 24 points, each corresponding to one hour.

We partition the hourly close prices $X_t$ into daily periods consisting of 24 observations. We calculate hourly logarithmic returns $\log(\frac{X_{t}}{X_{t-1}})$ and calculate the $2\times 2$ unnormalized covariance matrix for each day. 

We have computed 59 covariance matrices for the first period and split it to 31 corresponding to January and 28 corresponding to February. The p-value of the statistic $K_{31, 28}$ equals 0.9756, which points out that the statistic $K$ detected no significant change in the covariance structure of the asset hourly logarithmic returns in the period January - March 2019. 

In the case of the second period, we obtained 61 $2\times 2$ matrices which we have split into 31 corresponding to April and 30 corresponding to May. We estimated the p-value using bootstrap and $N = 10000$ replications. We obtained that the p-value of the statistic $K_{31,30}$ equals 0.0003, which points out that the covariance structure of the asset hourly logarithmic returns has significantly changed from April to May 2021.

Furthermore, we have analyzed the behaviour of the covariance structure of the hourly logarithmic returns 15 days before and 15 days after the well-documented drops in Bitcoin price that coincide with prominent historical events \cite{fruehwirt2021cumulation}. Every period has $n=720$ points, and the results are presented in Table \ref{pvals}. In most cases the test is unable to detect the change in the covariance structure. Furthermore, we conducted an analysis of 1-minute BTC \cite{BTCdata} and ETH \cite{ETHdata} data. We selected two-day periods and computed the minute logarithmic returns. We computed covariance matrices for every hour, resulting in a total of $n=48$ covariance matrices, with 24 for the first day and 24 for the second day. The results presented in Table \ref{pvals1min} indicate that our test has detected statistically significant difference in the distribution of covariance matrices both on the day and a day after the occurrence of the event, while it hasn't detected statistically significant difference a day prior to the occurrence of the event. This could be of significant practical importance in implementing stop losses \cite{kaminski2014stop}.
%is capable of detecting differences in the covariance structure, both before and just after the occurrence of the event, which could be of significant practical importance in implementing stop losses \cite{kaminski2014stop}.

\begin{table}[htbp]
\caption{p-values of testing the change in covariance structure before and after the Bitcoin important events - 1 hour data}
    \label{pvals}
    \begin{adjustbox}{width=\linewidth,center}
    \begin{tabular}{@{}llllp{3in}lll@{}}
\toprule
Period I start date  & Period II start date& Date of event ($T_0$) & Period II end date & Event description & $p_{[T_0-30D, T_0]}$ & $p_{[T_0-15D, T_0+15D]}$ \\ \midrule
9 October 2017 & 24 October 2017 & 8 November 2017 & 23 November 2017 & Developers cancel splitting of Bitcoin. & 0.31 & 0.05 \\
28 November 2017 & 13 December 2017 & 28 December 2017 & 12 January 2018 & South Korea announces strong measures to regulate trading of cryptocurrencies. & 0.32 & 0.27 \\
14 December 2017 & 28 December 2017 & 13 January 2018 & 28 January 2018 & Announcement that 80\% of Bitcoin has been mined. & 0.21 & 0.22 \\
31 December 2017 & 15 January 2018 & 30 January 2018 & 14 February 2018 & Facebook bans advertisements promoting, cryptocurrencies. & 0.17 & 0.82 \\
5 February 2018 & 20 February 2018 & 7 March 2018 & 22 March 2018 & The US Securities and Exchange Commission says it is necessary for crypto trading platforms to register. & 0.03 & 0.01 \\
12 February 2018 & 28 February 2018 & 14 March 2018 & 29 March 2018 & Google bans advertisements promoting cryptocurrencies. & 0.50 & 0.34 \\ \bottomrule
\end{tabular}
    \end{adjustbox}
\end{table}

% Please add the following required packages to your document preamble:
% \usepackage{booktabs}
% Please add the following required packages to your document preamble:
% \usepackage{booktabs}
\begin{table}[htbp]
\caption{p-values of testing the change in covariance structure before and after the Bitcoin important events - 1 minute data}
\label{pvals1min}
\centering
\begin{adjustbox}{width=\linewidth,center}
\begin{tabular}{@{}lp{2.5in}llll@{}}
\toprule
Date of event ($T_0$) & Event description & $p_{[T_0-2D, T_0-1D]}$ & $p_{[T_0-1D, T_0]}$ & $p_{[T_0,  T_0+1D]}$ \\ \midrule
8 November 2017 & Developers cancel splitting of Bitcoin. & 0.0523 & 0.0832 & 0.2049 \\
28 December 2017 & South Korea announces strong measures to regulate the trading of cryptocurrencies. & 0.5889 & 0.0027 & 0.0080 \\
13 January 2018 & Announcement that 80\% of Bitcoin has been mined. & 0.0300 & 0.0035 & 0.0493 \\
30 January 2018 & Facebook bans advertisements promoting cryptocurrencies. & 0.8224 & 0.0352 & 0.7774 \\
7 March 2018 & The US Securities and Exchange Commission says it is necessary for crypto trading platforms to register. & 0.0029 & 0.0225 & 0.6398 \\
14 March 2018 & Google bans advertisements promoting cryptocurrencies. & 0.403 & 0.3019 & 0.0453 \\ \bottomrule
\end{tabular}
\end{adjustbox}
\end{table}
\begin{figure}[htbp]
    \centering
    \subfigure{\includegraphics[width=0.49\textwidth]{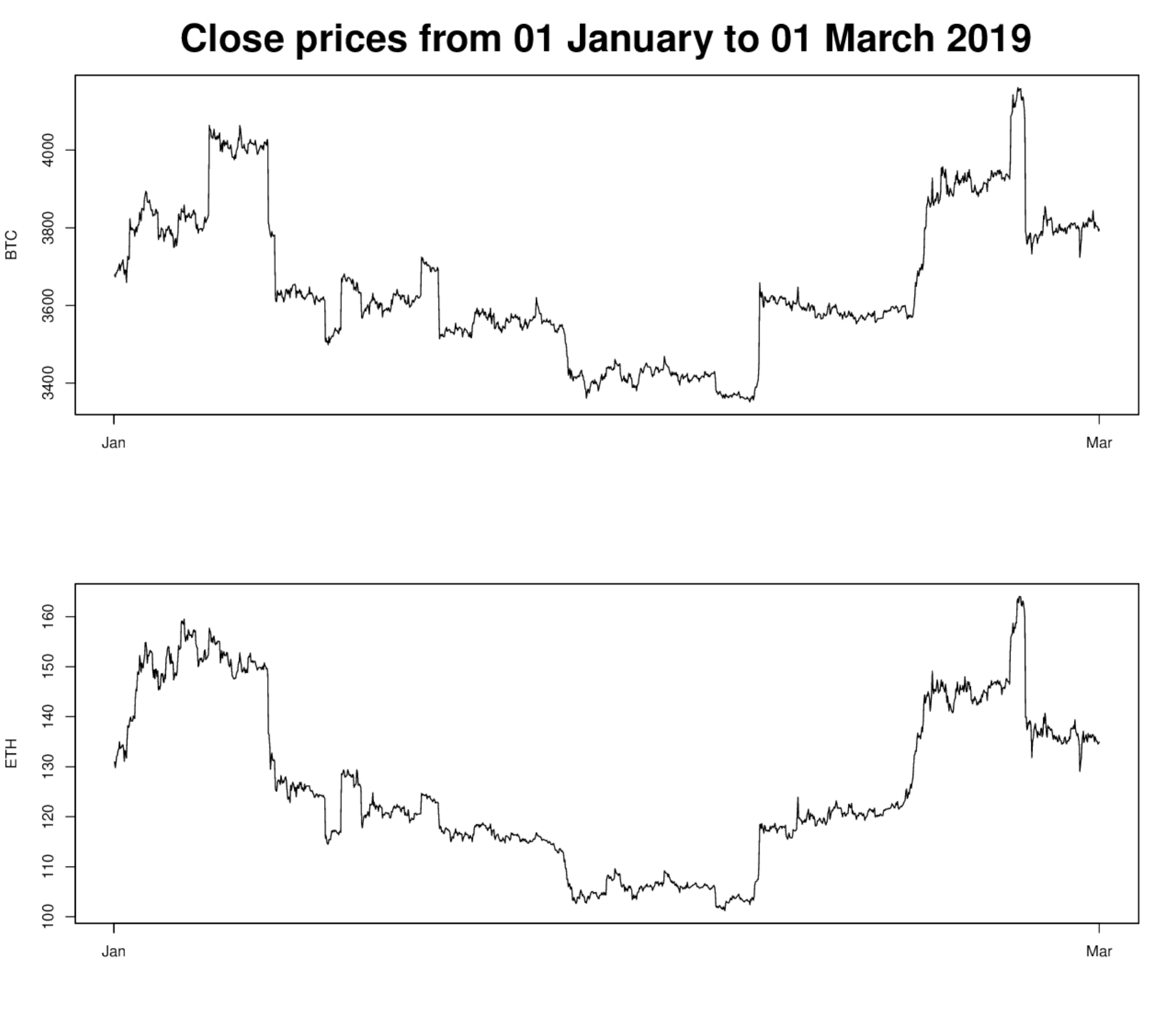}} 
    \subfigure{\includegraphics[width=0.49\textwidth]{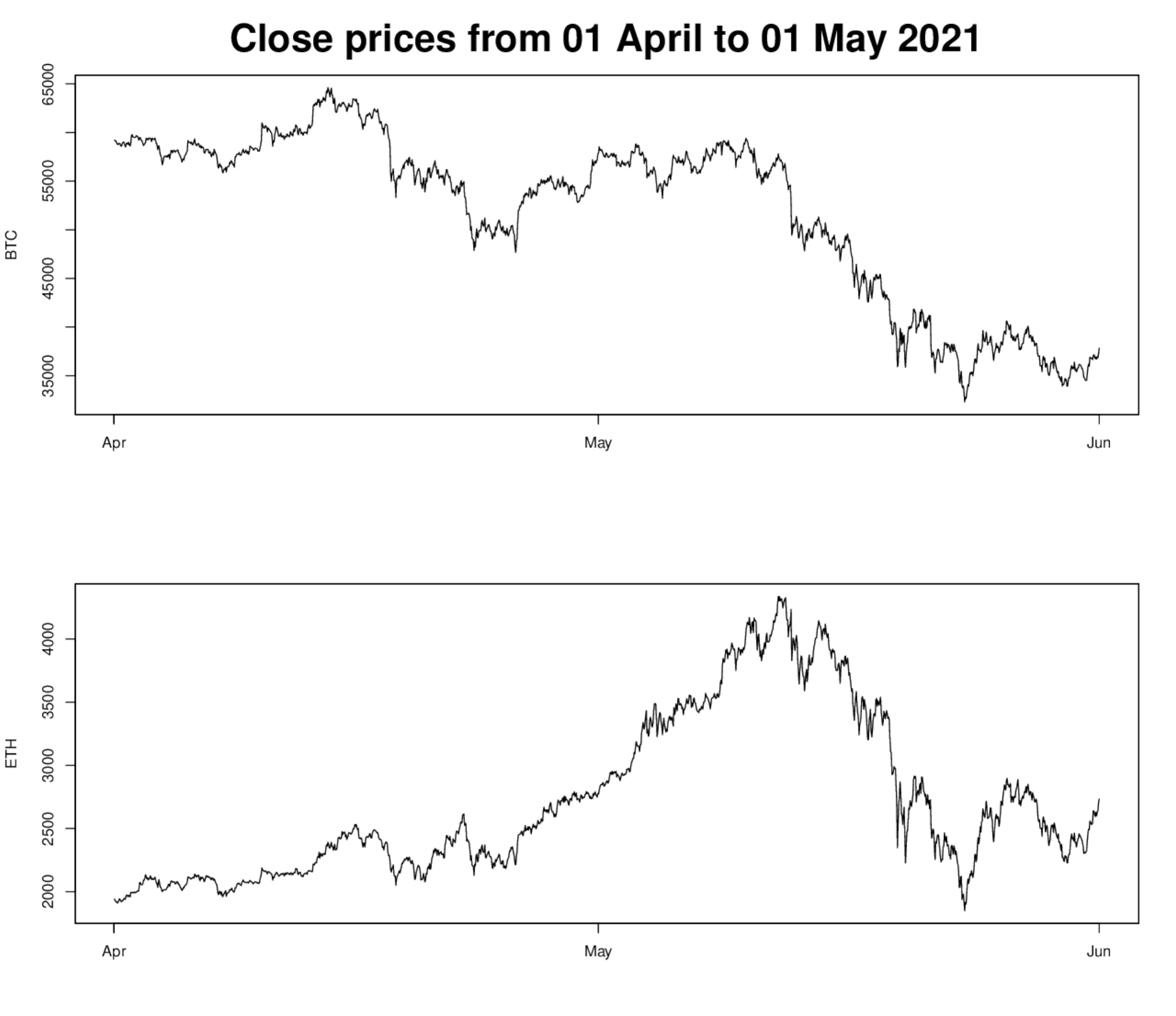}} 
    \caption{(a) Period 1 (b) Period 2}
    \label{fig:foobar}
\end{figure}
\begin{figure}[htbp]
    \centering
    \subfigure{\includegraphics[width=0.49\textwidth]{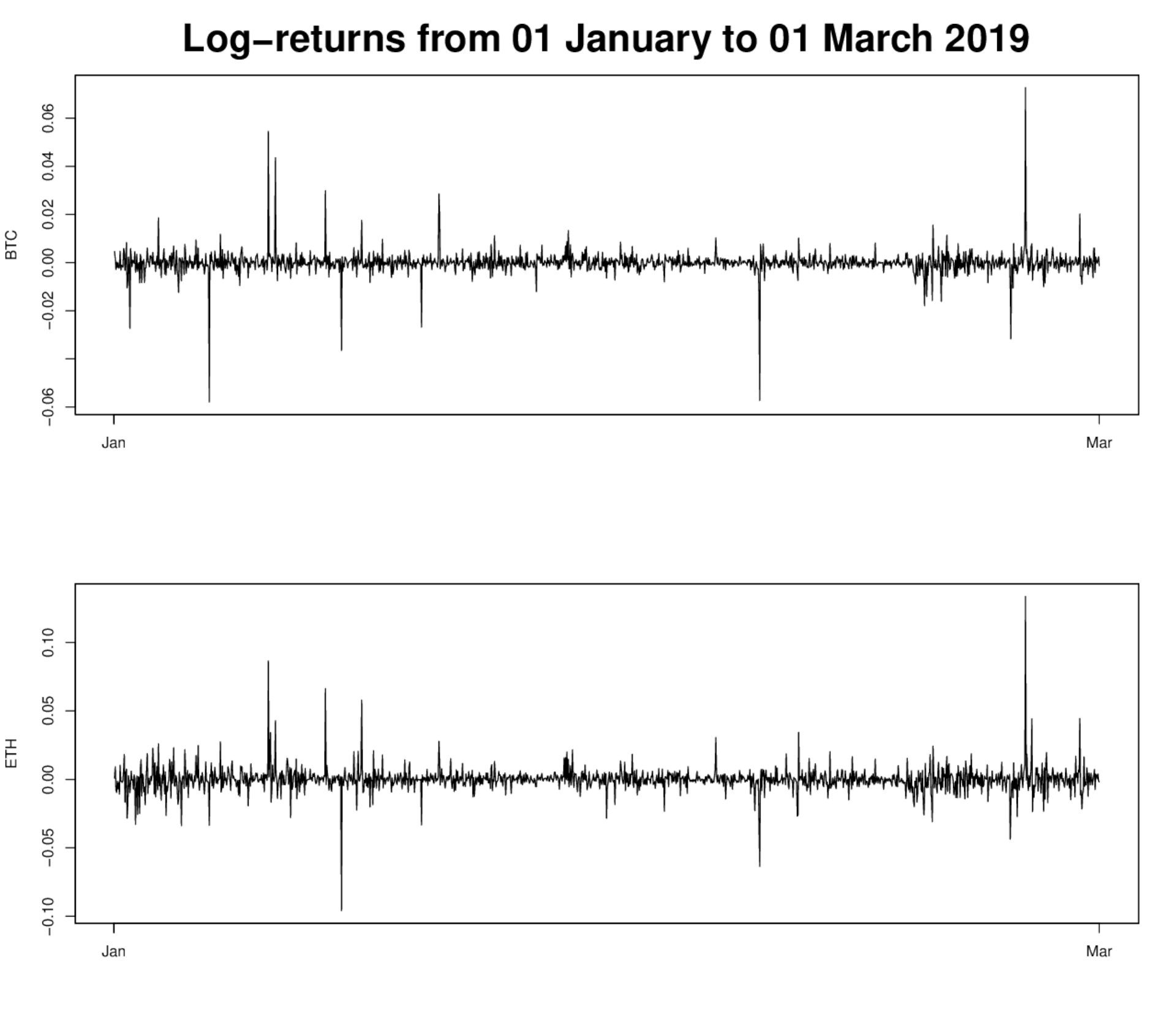}} 
    \subfigure{\includegraphics[width=0.49\textwidth]{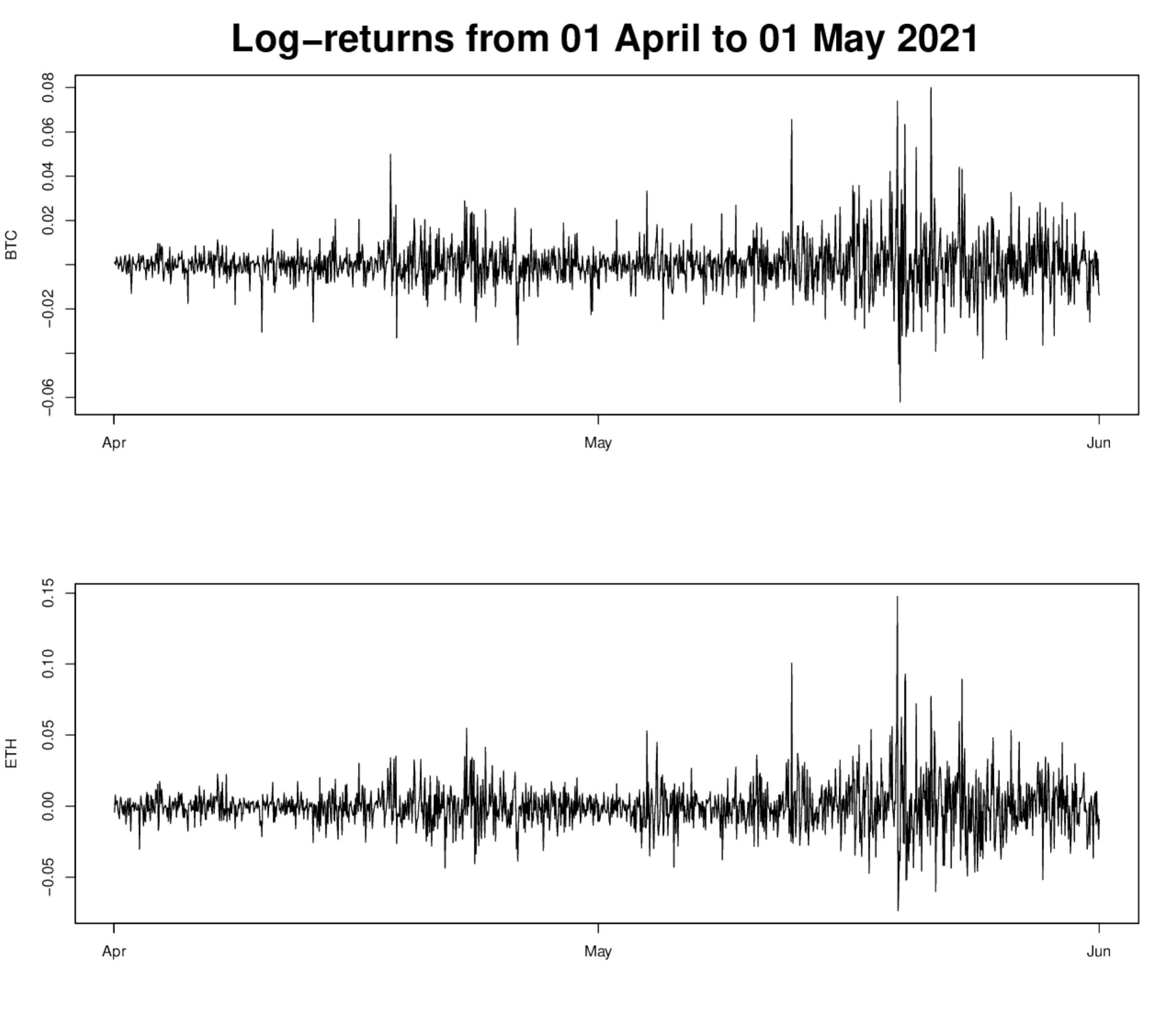}} 
    \caption{(a) Period 1 (b) Period 2}
    \label{fig:fooLR}
\end{figure}
{ %We wanted to see if our test is applicable to the stock data as well. 
The second data set consists of the stock data of the three biggest S\&P 500 companies at the moment. We calculate the daily log-returns for the closing prices of Apple Inc. (AAPL), Microsoft (MSFT), and Amazon (AMZN) from January 1, 2021, to January 1, 2023, covering 503 trading days. The data are sourced from Yahoo Finance (\url{https://finance.yahoo.com}). We partition the data into blocks of 7 trading days and compute a covariance matrix for each block. We then test if there is a significant change in the covariance structure between the first 31 blocks (January 1, 2021, to November 19, 2021) and the remaining 30 blocks (November 20, 2021, to January 1, 2023), obtaining a p-value of 0.001. The outcome suggests that there is a significant change in the covariance structure of the three largest companies in the S\&P 500 \cite{peli2023assessing}.
}

\section*{Conclusion}
In this paper, we propose the first test statistic for testing the equality of matrix distributions in the space of symmetric positive definite matrices. Our test provides important insights into the uncertainty of covariance estimates and is able to differentiate between different distributions with the same expectation. While the power study in higher dimensions is left for further research due to computational demands, our real data examples demonstrate the test's ability to detect changes in the time series of logarithmic returns, which could prove to be significant in change-point problems.

%We hope that our findings will inspire further research in this area and contribute to the development of more accurate and effective goodness-of-fit tests in the matrix case.

\section*{Declaration of interest}
The authors have no interest to declare.
\section*{Acknowledgements}
The authors would like to express their deepest gratitude to Professor Donald  Richards for his comments, which have significantly improved the paper.

The work of B. Milo\v sevi\'c is  supported by the Ministry of Science, Technological Development and Innovations of the Republic of Serbia (the contract 451-03-47/2023-01/ 200104), and  by the COST action
CA21163 - Text, functional and other high-dimensional data in econometrics: New models, methods, applications (HiTEc).
%\bibliographystyle{plainnat}
%\bibliography{references.bib}  

\end{document}